\author{{Zhiqiang Wei, Lou Zhao, Jiajia Guo, Derrick Wing Kwan Ng, and Jinhong Yuan \vspace{-10mm}}
\thanks{The authors are with the School of Electrical
Engineering and Telecommunications, UNSW, Australia (email: zhiqiang.wei@student.unsw.edu.au; lou.zhao@student.unsw.edu.au; jiajia.guo@student.unsw.edu.au; w.k.ng@unsw.edu.au; j.yuan@unsw.edu.au).
Derrick Wing
Kwan Ng is supported under Australian Research Councils (ARC) Discovery Early
Career Researcher Award funding scheme DE170100137.
Jinhong Yuan is currently on leave from UNSW and with CAS.
This work was supported in part by the ARC Discovery Project DP160104566, Linkage Project LP160100708, and CAS Pioneer Hundred Talents Program.}}
\title{\vspace{-7mm}A Multi-Beam NOMA Framework for Hybrid mmWave Systems}
\newtheorem{Thm}{Theorem}
\newtheorem{proof}{proof}
\newtheorem{Cor}{Corollary}
\newtheorem{T-Prob}{Transformed Problem}
\begin{document}
\maketitle
\begin{abstract}
In this paper, we propose a multi-beam non-orthogonal multiple access (NOMA) framework for hybrid millimeter wave (mmWave) systems.
The proposed framework enables the use of a limited number of radio frequency (RF) chains in hybrid mmWave systems to accommodate multiple users with various angles of departures (AODs).
A beam splitting technique is introduced to generate multiple analog beams to facilitate NOMA transmission.
We analyze the performance of a system when there are sufficient numbers of antennas driven by a single RF chain at each transceiver.
Furthermore, we derive the sufficient and necessary conditions of antenna allocation, which guarantees that the proposed multi-beam NOMA scheme outperforms the conventional time division multiple access (TDMA) scheme in terms of system sum-rate.
The numerical results confirm the accuracy of the developed analysis and unveil the performance gain achieved by the proposed multi-beam NOMA scheme over the single-beam NOMA scheme.
\end{abstract}
\vspace{-2mm}
\section{Introduction}
Millimeter wave (mmWave) communications are recognized as one of the most important technologies to achieve tens of gigabit data rate in the fifth-generation (5G) and its applications in outdoor systems have been extensively studied in the literature\cite{Rappaport2013,XiaoMing2017,wong2017key}.
Recently, hybrid mmWave systems \cite{zhao2017multiuser,GaoSubarray,lin2016energy}, where only few radio frequency (RF) chains are deployed to drive a large antenna array, have been proposed for practical implementations of mmWave technology.
Specifically, most of existing works \cite{zhao2017multiuser,GaoSubarray,lin2016energy} have focused on the channel estimation and hybrid precoding design, while the design of potential and efficient multiple access schemes for hybrid mmWave systems is rarely discussed.

Multiple access technology is fundamentally important to support multiuser communications in wireless networks.
Although it has been widely investigated in communication systems utilizing microwave band, it is still an open problem for hybrid mmWave communication systems.
In fact, conventional orthogonal multiple access (OMA) schemes adopted in previous generations of wireless networks cannot be applied directly to the hybrid mmWave systems, due to the special propagation features and constraints of hardware implementations.
Specifically, in hybrid mmWave systems, an analog precoder is shared by all the frequency components of the whole frequency band.
Besides, the beamwidth of an analog beam in mmWave frequency band is typically narrow\footnote{The $3$ dB beamwidth of a uniform linear array (ULA) with $M$ half wavelength spacing antennas is about $\frac{{102.1}}{M}$ degrees \cite{van2002optimum}.}.
Subsequently, the conventional OMA schemes, such as frequency division multiple access (FDMA) and orthogonal frequency division multiple access (OFDMA) are only applicable to the cases where multiple users share the same analog beam.
In other words, the huge bandwidth in mmWave frequency band cannot be utilized efficiently.
Another OMA scheme, time division multiple access (TDMA), might be a good candidate to facilitate multiple user communication in hybrid mmWave systems, where users share the spectrum via orthogonal time slots.
However, it is well-known that the spectral efficiency of TDMA is inferior to that of non-orthogonal multiple access (NOMA) \cite{Ding2015b} and the key challenge of implementing TDMA in hybrid mmWave systems is the requirement of high precision in time synchronization among users since mmWave communications usually provide a high symbol rate.
On the other hand, spatial division multiple access (SDMA) \cite{XiaoMing2017} is a potential technology for supporting multiple users, provided that the base station (BS) is equipped with enough numbers of RF chains and antennas.
However, in practical hybrid mmWave systems, the limited number of RF chains restricts the number of users that can be served simultaneously via SDMA, as one RF chain can serve only at most one user.
Thus, the combination of SDMA and mmWave \cite{zhao2017multiuser} is unable to cope with the emerging need of massive connectivity demanded in the future 5G communication systems\cite{Andrews2014}.
%
%
Therefore, this paper attempts to overcome the limitation of the small number of RF chains in hybrid mmWave systems, i.e., serving more users with a limited number of RF chains, via introducing the concept of NOMA into hybrid mmWave systems.

Recently, NOMA has drawn a significant amount of attention as a promising multiple access technique for 5G\cite{Dai2015,Ding2015b,WeiSurvey2016}.
%
%
In contrast to conventional OMA schemes, e.g. \cite{Kwan_AF_2010,DerrickEEOFDMA}, NOMA can serve multiple users via the same degrees of freedom (DOF) and achieve a higher spectral efficiency\cite{Ding2014}.
The superiority motivates the introduction of NOMA concept into hybrid mmWave systems to accommodate more users with a limited number of RF chains.
Several preliminary works considered NOMA schemes for mmWave communications\cite{Ding2017RandomBeamforming,Cui2017Optimal, WangBeamSpace2017}.
%
However, most of the proposed NOMA schemes in the literature, e.g. \cite{Ding2017RandomBeamforming,Cui2017Optimal, WangBeamSpace2017}, are single-beam NOMA where NOMA transmission is applied to users within the same analog beam.
%
%
Yet, due to the narrow analog beamwidth in hybrid mmWave systems, the single-beam NOMA schemes can only serve multiple users simultaneously when they have similar angle-of-departure (AOD) at the transmitter.
Therefore, the number of users that can be served concurrently by the existing single-beam NOMA schemes is very limited and it depends on the users' AOD distribution.
This will reduce the potential performance gain brought by NOMA in hybrid mmWave systems.
Most recently, the general idea of multi-beam NOMA was proposed and discussed for hybrid mmWave systems in \cite{Xiao2017MultiBeam}, which applies NOMA to multiple users with separated AODs.
However, \cite{Xiao2017MultiBeam} only offered a high-level discussion and the implementation details of multi-beam NOMA in hybrid mmWave systems were not included.
More importantly, the performance gain of the multi-beam NOMA scheme over the single-beam NOMA scheme has not been reported yet.

In this paper, we propose a multi-beam NOMA framework for hybrid mmWave systems, which is more flexible than the conventional single-beam NOMA schemes for serving more users with an arbitrary AODs distribution.
More specifically, all the users are divided into several NOMA groups and each NOMA group is associated with a RF chain.
We generate multiple analog beams for each NOMA group to facilitate downlink NOMA transmission by exploiting the channel sparsity and the large scale antenna array at the BS in hybrid mmWave systems.
Here, the multiple analog beams are generated by the proposed beam splitting technique, which dynamically divides the whole antenna array associated with a RF chain into multiple subarrays.
To provide more insights of the proposed multi-beam NOMA scheme, we analyze the performance of the considered hybrid mmWave system when there are sufficiently large numbers of antennas but only single RF chain equipped at each transceiver.
Furthermore, the sufficient and necessary conditions of antenna allocation are obtained which guarantee the proposed scheme outperforming TDMA with a fixed time and power allocation strategy.
Simulation results verify the developed analytical results and demonstrate that the proposed multi-beam NOMA scheme can achieve a higher system sum-rate than that of the conventional TDMA scheme when the antenna allocation conditions can be satisfied.
%
%
%
%

\vspace{-1mm}
\section{System Model}
\vspace{-1mm}
\subsection{System Model}

\begin{figure}[t]
\centering
{\subfigure[The system model of the proposed multi-beam NOMA scheme for hybrid mmWave systems.]
{\label{MultiBeamNOMA:a} 
\includegraphics[width=0.48\textwidth]{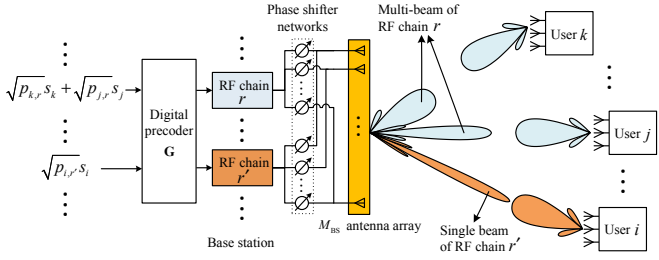}}\vspace{-4mm}}
{\subfigure[A hybrid mmWave structure at user $k$.]
{\label{MultiBeamNOMA:b} 
\includegraphics[width=0.26\textwidth]{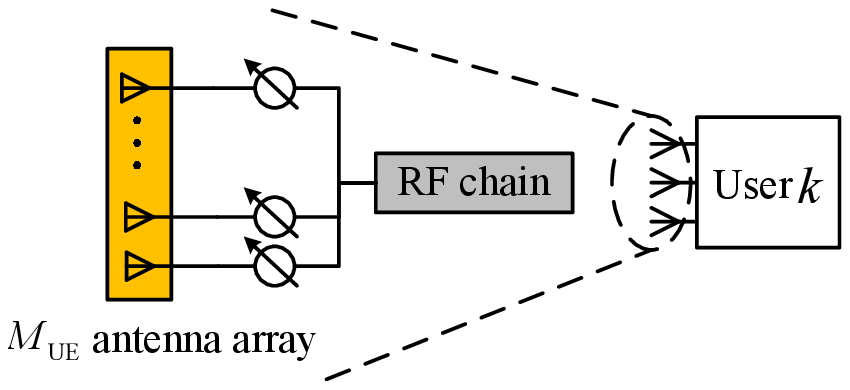}}\vspace{-2mm}}
\caption{System model of the proposed multi-beam NOMA scheme for hybrid mmWave systems.}\vspace{-6mm}
\label{MultiBeamNOMA}%
\end{figure}

We consider a downlink hybrid mmWave communication in a single-cell system with one base station (BS) and $K$ users, as shown in Figure \ref{MultiBeamNOMA}.
We assume that the BS is equipped with $M_{\mathrm{BS}} \ge1$ antennas and there are only $N_{\mathrm{RF}}\ge1$ RF chains such that $M_{\mathrm{BS}} \gg N_{\mathrm{RF}}$.
We note that each RF chain can access all the $M_{\mathrm{BS}}$ antennas through $M_{\mathrm{BS}}$ phase shifters, as shown in Figure \ref{MultiBeamNOMA:a}.
Besides, each user is equipped with $M_{\mathrm{UE}} \ge 1$ antennas connected via a single RF chain, as shown in Figure \ref{MultiBeamNOMA:b}.
We adopt the uniform linear array (ULA) structure as it is commonly used in the literature \cite{zhao2017multiuser}.
We assume that the antennas at each transceiver are deployed and separated with equal-space of half wavelength with respect to the neighboring antennas.
In this work, we focus on the overloaded scenario with $K \ge N_{\mathrm{RF}}$, which is fundamentally different from existing works in hybrid mmWave communications, e.g. \cite{zhao2017multiuser,GaoSubarray,lin2016energy}.
We note that our considered system model is a generalization of the existing works\cite{zhao2017multiuser,GaoSubarray,lin2016energy}.
For example, the considered system is degenerated to the conventional hybrid mmWave systems when $K \le N_{\mathrm{RF}}$ and each NOMA group contains a single user.
\vspace{-1mm}
\subsection{Channel Model}
\vspace{-1mm}
For mmWave communications, we employ the Saleh-Valenzuela channel model \cite{XiaoMing2017} as it has been widely adopted in the literature, and the channel matrix between the BS and user $k$, ${{\mathbf{H}}_k} \in \mathbb{C}^{{ M_{\mathrm{UE}} \times M_{\mathrm{BS}}}}$, can be represented as
\vspace{-2mm}
\begin{equation}\label{ChannelModel1}
{{\mathbf{H}}_k} = {\alpha _{k,0}}{\mathbf{H}}_{k,0} + \sum\nolimits_{l = 1}^L {\alpha _{k,l}{\mathbf{H}}_{k,l}},\vspace{-2mm}
\end{equation}
where ${\mathbf{H}}_{k,0} \in \mathbb{C}^{ M_{\mathrm{UE}} \times M_{\mathrm{BS}} }$ is the line-of-sight (LOS) channel matrix between the BS and user $k$ with ${\alpha _{k,0}}$ denoting the LOS complex path gain.
${\mathbf{H}}_{k,l} \in \mathbb{C}^{ M_{\mathrm{UE}} \times M_{\mathrm{BS}} }$ denotes the $l$-th non-line-of-sight (NLOS) path channel matrix of user $k$ with ${\alpha _{k,l}}$ denoting the corresponding NLOS complex path gains, $1 \le l \le L$, and $L$ denoting the total number of NLOS paths.
In particular, ${\mathbf{H}}_{k,l}$, $\forall l \in \{0,\ldots,L\}$ can be given by
\vspace{-2mm}
\begin{equation}
{\mathbf{H}}_{k,l} = {\mathbf{a}}_{\mathrm{UE}} \left(  \phi _{k,l} \right){\mathbf{a}}_{\mathrm{BS}}^{\mathrm{H}}\left( \theta _{k,l} \right),\vspace{-2mm}
\end{equation}
with
\vspace{-2mm}
\begin{equation}
\hspace{-2mm}{\mathbf{a}}_{\mathrm{BS}}\left( \theta _{k,l} \right) \hspace{-1mm}=\hspace{-1mm} \left[ \hspace{-0.5mm}{{e^{ j \frac{{M_{{\mathrm{BS}}}} \hspace{-0.5mm}-\hspace{-0.5mm} 1}{2} \pi\hspace{-0.5mm}\cos \left( \theta _{k,l} \right)}}, \ldots ,{e^{ - j\frac{{M_{{\mathrm{BS}}}} \hspace{-0.5mm}-\hspace{-0.5mm} 1}{2}\pi \hspace{-0.5mm} \cos \left( \theta _{k,l} \right)}}} \hspace{-0.5mm}\right]^{\mathrm{T}}\vspace{-1.5mm}
\end{equation}
denoting the array response vector \cite{van2002optimum} for the AOD of the $l$-th path ${\theta _{k,l}}$ at the BS and
\vspace{-2mm}
\begin{equation}
\hspace{-2mm}{\mathbf{a}}_{\mathrm{UE}}\left( \phi _{k,l} \right) \hspace{-1mm}=\hspace{-1mm} \left[ {e^{ j \frac{{M_{{\mathrm{UE}}}} \hspace{-0.5mm}-\hspace{-0.5mm} 1}{2}\pi \hspace{-0.5mm}\cos \left( \phi _{k,l} \right)}}, \ldots ,{e^{ - j\frac{{M_{{\mathrm{UE}}}} \hspace{-0.5mm}-\hspace{-0.5mm} 1}{2}\pi \hspace{-0.5mm}\cos \left( \phi _{k,l} \right)}} \right] ^ {\mathrm{T}}\vspace{-1.5mm}
\end{equation}
denoting the array response vector \cite{van2002optimum} for the angle-of-arrival (AOA) of the $l$-th path ${\phi _{k,l}}$ at user $k$.
The operators ${\left( \cdot \right)^{\mathrm{T}}}$ and ${\left( \cdot \right)^{\mathrm{H}}}$ denote the transpose and the Hermitian transpose of a vector or matrix, respectively.
In this work, we adopt the LOS path based effective channel estimation scheme proposed in \cite{zhao2017multiuser}, which has been proved to achieve a considerable performance of fully digital mmWave systems.
Besides, we assume that the partial LOS channel state information, including the AODs ${\theta _{k,0}}$ and the complex path gain ${{\alpha _{k,0}}}$ for all the users, is known at the BS owing to the beam tracking techniques \cite{BeamTracking}.
For a similar reason, the AOA ${\phi _{k,0}}$ is known at user $k$.
\vspace{-1mm}
\section{Multi-beam NOMA Framework}

The block diagram of the proposed multi-beam NOMA framework for hybrid mmWave systems is shown in Figure \ref{MultiBeamNOMAScheme}, which will be detailed in the sequel.
Note that, this paper aims to propose the multi-beam NOMA framework for hybrid mmWave communication systems and demonstrate the performance gain achieved by generating multiple analog beams.
Due to the page limitation, the user grouping, antenna allocation, digital precoder, and power allocation design will be considered in our future work.
\begin{figure}[t]
\vspace{-3mm}
\centering
\includegraphics[width=3in]{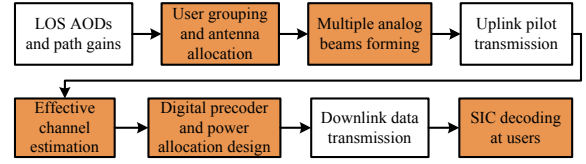}\vspace{-2mm}
\caption{The proposed multi-beam NOMA scheme for hybrid mmWave systems. The shadowed blocks are the design focused of this paper.}\vspace{-6mm}
\label{MultiBeamNOMAScheme}
\end{figure}
\vspace{-1mm}
\subsection{User Grouping and Antenna Allocation}
\vspace{-1mm}
Based on all the LOS AODs of the users $\left\{ {\theta _{1,0}}, \ldots ,{\theta _{K,0}} \right\}$ and their path gains $\left\{ {\alpha _{1,0}}, \ldots ,{\alpha _{K,0}} \right\}$, we perform user grouping and antenna allocation.
In particular, multiple users might be allocated with the same RF chain forming a NOMA group.
We can define the user scheduling variable as follows:
\vspace{-1.5mm}
\begin{equation}
{u_{k,r}} = \left\{ {\begin{array}{*{20}{c}}
1&{{\mathrm{user}}\;k\;{\mathrm{is}}\;{\mathrm{allocated}}\;{\mathrm{to}}\;{\mathrm{RF}}\;{\mathrm{chain}}\;r},\\
0&{{\mathrm{others}}}.
\end{array}} \right.\vspace{-1.5mm}
\end{equation}
To reduce the computational complexity and time delay of SIC decoding within the NOMA group, we restrict at most $G$ users that can be allocated with the same RF chain, i.e., $\sum\nolimits_{k = 1}^K {u_{k,r}}  \le G$, $\forall r \in \{1,\ldots,{N_{\mathrm{RF}}}\}$.
In addition, due to the limited number of RF chains in the considered hybrid systems, we assume that each user can be allocated with at most one RF chain, i.e., $\sum\nolimits_{r = 1}^{N_{\mathrm{RF}}} {u_{k,r}}  \le 1$, $\forall k \in \{1,\ldots,K\}$.

The beam splitting is realized via allocating adjacent antennas to form multiple subarrays and generating an analog beam on each subarray.
For antenna allocation, let $M_{k,r}$ denote the number of antennas allocated to user $k$ associated with RF chain $r$, with $\sum\nolimits_{k = 1}^K u_{k,r}{M_{k,r}}  \le M_{\mathrm{BS}}$, $\forall r$.
\vspace{-1mm}
\subsection{Multiple Analog Beams with Beam Splitting}
\vspace{-1mm}
Now, to generate multiple analog beams with beam splitting, we adopt the coefficients of the analog precoder according to ${M_{k,r}}$ and the LOS AOD of user $k$, ${\theta _{k,0}}$, for each subarray.
For instance, in Figure \ref{MultiBeamNOMA:a}, user $k$ and user $j$ are scheduled to be served by RF chain $r$ at the BS, where their allocated number of antennas are ${M_{k,r}}$ and ${M_{j,r}}$, respectively, satisfying ${M_{k,r}} + {M_{j,r}} \le {M_{\mathrm{BS}}}$.
%
%
Then, the analog precoder for the ${M_{k,r}}$ antennas subarray is given by
\vspace{-2mm}
\begin{align}\label{SubarayWeight1}
&{\mathbf{w}}\left( {M_{k,r}},{\theta _{k,0}} \right)= \notag\\[-1mm]
&{\frac{1}{{\sqrt {{M_{BS}}} }}\left[ {e^{j\frac{ M_{k,r} \hspace{-0.5mm}-\hspace{-0.5mm} 1 }{2}\pi\hspace{-0.5mm} \cos \left( {\theta _{k,0}} \right)}}, \ldots ,{e^{ - j\frac{ M_{k,r} \hspace{-0.5mm}-\hspace{-0.5mm} 1 }{2}\pi\hspace{-0.5mm}\cos \left( {\theta _{k,0}} \right)}} \right] ^ {\mathrm{T}}},
\end{align}
\par
\vspace{-2mm}
\noindent
and the analog precoder for the ${M_{j,r}}$ antennas subarray is given by
\vspace{-2mm}
\begin{align}\label{SubarayWeight2}
&{\mathbf{w}}\left( {M_{j,r},{\theta _{j,0}}} \right) = \notag\\[-1mm]
&{\frac{1}{\sqrt{M_{BS}}}\left[ e^{j\frac{ M_{j,r} \hspace{-0.5mm}-\hspace{-0.5mm} 1 }{2}\pi\hspace{-0.5mm} \cos \left( {{\theta _{j,0}}} \right)}, \ldots ,e^{ - j\frac{ M_{j,r} \hspace{-0.5mm}-\hspace{-0.5mm} 1 }{2}\pi\hspace{-0.5mm} \cos \left( {\theta _{j,0}} \right)} \right] ^ {\mathrm{T}}},
\end{align}
\par
\vspace{-2mm}
\noindent
where ${\mathbf{w}}\left( M_{k,r},{\theta _{k,0}} \right) \in \mathbb{C}^{ M_{k,r} \times 1}$ and ${\mathbf{w}}\left( M_{j,r},\theta _{j,0} \right) \in \mathbb{C}^{ M_{j,r} \times 1}$.
The same normalized factor $\frac{1}{\sqrt {M_{BS}} }$ in \eqref{SubarayWeight1} and \eqref{SubarayWeight2} is introduced to fulfill the constant modulus constraint of phase shifters\cite{XiaoMing2017}.
As a result, the analog precoder for RF chain $r$ is given by
\vspace{-2mm}
\begin{equation}\label{ArrayWeight1}
{\mathbf{w}}_r = \left[ {\begin{array}{*{20}{c}}
{{\mathbf{w}}^{\mathrm{T}}\left( {M_{k,r}},{\theta _{k,0}} \right)}&{{{\mathbf{w}}^{\mathrm{T}}}\left( {M_{j,r}},{\theta _{j,0}} \right)}
\end{array}} \right]^{\mathrm{T}}.\vspace{-2mm}
\end{equation}
On the other hand, if user $i$ is allocated with RF chain $r'$ exclusively, then all the ${M_{\mathrm{BS}}}$ antennas of RF chain $r'$ will be allocated to user $k$.
Besides, the analog precoder for user $i$ is identical to the conventional analog precoder in hybrid mmWave systems, and it is given by
\vspace{-2mm}
\begin{align}\label{ArrayWeight2}
&{{\mathbf{w}}_r'}= {\mathbf{w}}\left( {{M_{\mathrm{BS}}},{\theta _{i,0}}} \right) = \notag\\[-1mm]
&{\frac{1}{{\sqrt {{M_{BS}}} }}\left[ {{e^{j\frac{{ {{M_{\mathrm{BS}}} \hspace{-0.5mm}-\hspace{-0.5mm} 1} }}{2}\pi \hspace{-0.5mm}\cos \left( {{\theta _{i,0}}} \right)}}, \ldots ,{e^{ - j\frac{{ {{M_{\mathrm{BS}}} \hspace{-0.5mm}-\hspace{-0.5mm} 1} }}{2}\pi \hspace{-0.5mm}\cos \left( {{\theta _{i,0}}} \right)}}} \right] ^ {\mathrm{T}}}.
\end{align}
\par

\noindent
Note that compared to single-beam NOMA schemes for mmWave systems\cite{Ding2017RandomBeamforming,Cui2017Optimal, WangBeamSpace2017}, the LOS AODs ${\theta _{k,0}}$ and ${\theta _{j,0}}$ in the proposed scheme are not required to be in the same analog beam.
In other words, our multi-beam NOMA scheme provides a higher flexibility in user grouping.

\begin{figure}[t]
\vspace{-7mm}
\centering
\includegraphics[width=3.0in]{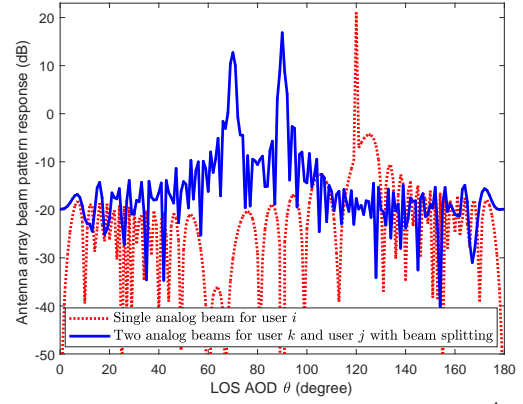}\vspace{-4mm}
\caption{Antenna array beam pattern response for ${{\mathbf{w}}_r}$ in \eqref{ArrayWeight1} and ${{\mathbf{w}}_r'}$ in \eqref{ArrayWeight2} at the BS via the beam splitting technique. We assume ${M_{\mathrm{BS}}} = 128$, ${M_{k,r}} = 50$, ${M_{j,r}} = 78$, ${\theta _k} = 70^{\circ}$, ${\theta _j} = 90^{\circ}$, and ${\theta _i} = 120^{\circ}$.}\vspace{-4mm}
\label{MultiBeamNOMA2}
\end{figure}

Based on the analog precoders ${{\mathbf{w}}_r}$ and ${{\mathbf{w}}_r'}$, RF chain $r$ will generate two analog beams steering toward user $k$ and user $j$, respectively, while RF chain $r'$ will generate a single analog beam steering to user $i$.
The antenna array beam pattern responses for ${{\mathbf{w}}_r}$ and ${{\mathbf{w}}_r'}$ are shown in Figure \ref{MultiBeamNOMA2} to illustrate the multiple analog beams generated via beam splitting.
Compared to the single analog beam for user $i$, we can observe the decreases in the  magnitude of the main beam response  and the increases in beamwidth of the two analog beams for user $k$ and $j$, respectively.
In other words, forming multiple analog beams via beam splitting decreases the beamforming gain which may slightly reduce the beamforming efficiency.
In fact, the flexibility of the proposed scheme in accommodating more users via multiple beams comes at the expense of a lower gain in each beam.
Intuitively, the proposed multi-beam NOMA scheme generates multiple virtual tunnels in the beam-domain for downlink NOMA transmission.
It is worth to note that the beam splitting technique is essentially an array beamforming gain allocation method.
Apart from the power domain multiplexing in conventional NOMA schemes\cite{WeiSurvey2016}, the proposed multi-beam NOMA scheme further exploit the beam-domain for efficient multi-user multiplexing.
%

%
%

Now, to generalize the system model for generating arbitrary numbers of analog beams, we integrate the users scheduling variables ${u_{k,r}}$ with ${{\mathbf{w}}_r}$ as follows
\vspace{-2mm}
\begin{equation}
\hspace{-3mm}{{\mathbf{w}}_r} \hspace{-1mm}=\hspace{-1.2mm} {\left[
{{{\mathbf{w}}^{\mathrm{T}}}\hspace{-1.2mm}\left( {{u_{1,r}},{M_{1,r}},{\theta _{1,0}}} \right)}, \ldots ,{{{\mathbf{w}}^{\mathrm{T}}}\hspace{-1.2mm}\left( {{u_{K,r}},{M_{K,r}},{\theta _{K,0}}} \right)}
\right]^{\mathrm{T}}},\hspace{-1.5mm}\vspace{-2mm}
\end{equation}
with
\vspace{-2mm}
\begin{equation}\label{Weight2}
{\mathbf{w}}\left( {{u_{k,r}},{M_{k,r}},{\theta _{k,0}}} \right) = \left\{ {\begin{array}{*{20}{c}}
\emptyset &{{u_{k,r}} = 0},\\
{\mathbf{w}}\left( {{M_{k,r}},{\theta _{k,0}}} \right) &{{u_{k,r}} = 1}.
\end{array}} \right.\vspace{-2mm}
\end{equation}
It can be observed in \eqref{Weight2} that ${\mathbf{w}}\left( {{u_{k,r}},{M_{k,r}},{\theta _{k,0}}} \right)$ is an empty set $\emptyset$ when ${{u_{k,r}} = 0}$, $\forall k$ and ${{\mathbf{w}}_r}$ only consists of the analog precoders of the users allocated with RF chain $r$, i.e., ${{u_{k,r}} = 1}$, $\forall k$.
So far, the user grouping, antenna allocation, and multiple analog beams forming in the proposed multi-beam NOMA scheme only rely on LOS AODs ${\theta _{k,0}}$ and the complex path gains ${{\alpha _{k,0}}}$, which is fundamentally different from the existing fully digital MIMO-NOMA schemes\cite{Hanif2016}.
\vspace{-1mm}
\subsection{Effective Channel Estimation}
\vspace{-1mm}
%

Now, for a given user grouping strategy, antenna allocation, multiple analog beams forming, all the users transmit their unique orthogonal pilots to the BS in the uplink to perform effective channel estimation.
In this paper, we adopt the time division duplex (TDD) and exploit the channel reciprocity, i.e.,  the estimated effective channel in the uplink can be used for digital precoder design in the downlink.
The effective channel of user $k$ on RF chain $r$ at the BS is given by\cite{zhao2017multiuser}
\vspace{-2mm}
\begin{equation}\label{EffectiveChannel1}
{\tilde h_{k,r}} = {{\mathbf{v}}_k^{\mathrm{H}}}{{\mathbf{H}}_{k}}{\mathbf{w}}_r,\;\forall k,r,\vspace{-2mm}
\end{equation}
where ${{\mathbf{v}}_k} = \frac{1}{{\sqrt {{M_{{\mathrm{UE}}}}} }}{{\mathbf{a}}_{{\mathrm{UE}}}}\left( {{\phi _{k,0}}} \right)$ denotes the analog beamforming vector adopted at user $k$.
We note that, the direction of analog beamforming at each user is always steered toward the BS to fully exploit the beamforming gain.
Using the beam response function of the ULA \cite{van2002optimum}, the effective channel of user $k$ on RF chain $r$ at the BS can be rewritten as
\vspace{-2mm}
\begin{equation}\label{EffectiveChannel2}
{{\tilde h}_{k,r}} \hspace{-1mm}=\hspace{-1mm} \sum\limits_{l = 0}^L {\sum\limits_{k' = 1}^K \hspace{-1mm}{\frac{{{u_{k',r}}{\alpha _{k,l}}}}{{\sqrt {{M_{{\mathrm{UE}}}}{M_{{\mathrm{BS}}}}} }}\frac{{\sin \left( {{M_{{\mathrm{UE}}}}{\kappa _{k,l}}} \right)}}{{\sin \left( {{\kappa _{k,l}}} \right)}}\frac{{\sin \left( {{M_{k',r}}{\varphi _{k,k',l}}} \right)}}{{\sin \left( {{\varphi _{k,k',l}}} \right)}}} },\vspace{-1mm}
\end{equation}
where ${\kappa _{k,l}} = \frac{\pi }{2}\left( {\cos {\phi _{k,0}} - \cos {\phi _{k,l}}} \right)$ and ${\varphi _{k,k',l}} = \frac{{\pi}}{2 }\left( {\cos {\theta _{k,0}} - \cos {\theta _{k',l}}} \right)$ denote the phase differences of received signals from different AODs/AOAs.
Note that, even if user $k$ is not allocated with RF chain $r$, i.e., ${u_{k,r}} = 0$, it still has an effective channel on RF chain $r$ at the BS, which denotes its response on the generated multiple analog beams of RF chain $r$.
Through the uplink pilot transmission, the effective channels of all the users on all the RF chains can be estimated at the BS.
In the following, we denote the effective channel vector of user $k$ as ${{{\mathbf{\tilde h}}}_k} = {\left[ {\begin{array}{*{20}{c}}
{{{\tilde h}_{k,1}}}& \cdots &{{{\tilde h}_{k,{N_{{\mathrm{RF}}}}}}}
\end{array}} \right]} \in \mathbb{C}^{{ 1\times N_{\mathrm{RF}}}}$ and denote the effective channel matrix between the BS and the $K$ users as ${\mathbf{\tilde H}} = {\left[ {\begin{array}{*{20}{c}}
{{\mathbf{h}}_1^{\mathrm{T}}}& \cdots &{{\mathbf{h}}_K^{\mathrm{T}}}
\end{array}} \right]^{\mathrm{T}}} \in \mathbb{C}^{{ K \times N_{\mathrm{RF}}}}$.
\vspace{-5mm}
\subsection{Digital Precoder and Power Allocation Design}
\vspace{-1mm}
Given the estimated effective channel matrix ${{\mathbf{\tilde H}}}$, the digital precoder and the power allocation can be designed accordingly.
Note that, all the users in a NOMA group share the same digital precoder and there are totally $N_{\mathrm{RF}}$ groups of users to be served in the proposed multi-beam NOMA scheme.
Assuming that the adopted digital precoder is denoted as ${\mathbf{G}} = \left[ {\begin{array}{*{20}{c}}
{{{\mathbf{g}}_1}}& \cdots &{{{\mathbf{g}}_{{N_{{\mathrm{RF}}}}}}}
\end{array}} \right] \in \mathbb{C}^{{ N_{\mathrm{RF}} \times N_{\mathrm{RF}}}}$, where ${{\mathbf{g}}_r}$ denotes the digital precoder for the NOMA group associated with RF chain $r$.
%
%
In addition, we assume that the designed power allocation for downlink transmission of user $k$ associated with RF chain $r$ at the BS is denoted as $p_{k,r}$, with $\sum\nolimits_{k = 1}^K {{u_{k,r}}{p_{k,r}}}  \le p_{\mathrm{max}}$, $\forall r$.
Then, the received signal at user $k$ is given by
\vspace{-2mm}
\begin{equation}\label{DLRx1}
{y_k} = {{{\mathbf{\tilde h}}}_k}{{\mathbf{G}}\mathbf{t}} + {z_k} = {{{\mathbf{\tilde h}}}_k}\sum\nolimits_{r = 1}^{{N_{{\mathrm{RF}}}}} {{{\mathbf{g}}_r}{t_r}}  + {z_k},\vspace{-2mm}
\end{equation}
where ${\mathbf{t}} = {\left[
{{t_1}}, \ldots ,{{t_{{N_{{\mathrm{RF}}}}}}}
\right]^{\mathrm{T}}} \in \mathbb{C}^{{ N_{\mathrm{RF}} \times 1}}$ denotes the superimposed signals of all NOMA groups with ${t_r} = \sum\nolimits_{k = 1}^K {{u_{k,r}}} \sqrt{p_{k,r}}{s_k}$ denoting the superimposed signal of the NOMA group associated with RF chain $r$.
Variable ${s_k} \in \mathbb{C}$ denotes the modulated symbol for user $k$ and $z_k \sim {\cal CN}(0,\sigma^2)$ denotes the additive white Gaussian noise (AWGN) at user $k$, where ${\cal CN}(0,\sigma^2)$ denotes the circularly symmetric complex Gaussian distribution with zero-mean and variance of $\sigma^2$.
For instance, in Figure \ref{MultiBeamNOMA:a}, if user $k$ and user $j$ are allocated to RF chain $r$ and user $i$ is allocated to RF chain $r'$, we have ${t_r} = \sqrt{p_{k,r}}{s_k} + \sqrt{p_{j,r}}{s_j}$ and ${t_{r'}} = \sqrt{p_{i,r'}}{s_i}$.

Note that, the conventional precoder design method, such as the zero-forcing (ZF)\cite{zhao2017multiuser}, can be utilized in our proposed multi-beam NOMA scheme to design ${\mathbf{G}}$.
However, this paper focuses on the pure analog beamforming for hybrid mmWave systems\cite{GaoSubarray} to study the performance gain achieved by generating multiple analog beams\footnote{Note that the use of digital precoder will further improve the performance of the proposed multi-beam NOMA scheme due to the mitigation of interferences among different NOMA groups. However, due to the page limitation, we will consider it in our future work.}.
For the pure analog beamforming in hybrid mmWave systems, the digital precoder is given by ${\mathbf{G}} = {{\mathbf{I}}_{{N_{{\mathrm{RF}}}}}}$ and each RF chain serves its associated NOMA group correspondingly.
Now, when user $k$ is allocated with RF chain $r$ at the BS, i.e., ${u_{k,r}} = 1$, the received signal at user $k$ in \eqref{DLRx1} can be rewritten as
\vspace{-2mm}
\begin{align}\label{DLRx2}
{y_k}&= \underbrace{{{\tilde h}_{k,r}}\sqrt {{p_{k,r}}} {s_k}}_{\text{Desired signal}} + \underbrace{{{\tilde h}_{k,r}}\sum\nolimits_{k' \ne k}^K {{u_{k',r}}\sqrt {{p_{k',r}}} {s_{k'}}}}_{\text{Intra-group interference}} \notag\\[-1mm]
& + \underbrace{\sum\nolimits_{r' \ne r}^{{N_{{\mathrm{RF}}}}} {{{\tilde h}_{k,r'}}\sum\nolimits_{k' = 1}^K {{u_{k',r'}}} \sqrt {{p_{k',r'}}} {s_{k'}}}}_{\text{Inter-group interference}}  + {z_k} ,
\end{align}
\par
\vspace{-2mm}
\noindent
where the first term denotes the desired signal of user $k$, the second term denotes the intra-group interference caused by the other users within the NOMA group associated with RF chain $r$, and the third term denotes the inter-group interference originated from all the other RF chains.
On one hand, owing to the analog beamforming, the effective channel gain between user $k$ and all the other RF chains, $\forall {r'} \neq r$, $\left|{{\tilde h}_{k,r'}}\right|$, is generally very small.
Therefore, the inter-group interference is limited.
On the other hand, the intra-group interference can be handled by the SIC decoding as detailed in the following.
\vspace{-1mm}
\subsection{SIC Decoding at Users}
\vspace{-1mm}
At users side, as the traditional downlink NOMA schemes\cite{Wei2016NOMA,WeiTCOM2017}, SIC decoding is performed at the strong users within one NOMA group, while the weak users directly decode the messages by treating the signals of strong users as noise.
In this paper, we define the strong or weak user by the LOS path gain.
Without loss of generality, we assume that the users are indexed in the descending order of LOS path gains, i.e., ${\left| {{\alpha _{1,0}}} \right|^2} \ge {\left| {{\alpha _{2,0}}} \right|^2} \ge , \ldots , \ge {\left| {{\alpha _{K,0}}} \right|^2}$.
Then, user $1$ has the strongest channel on average, while user $K$ possesses the weakest channel on average.

According to the downlink NOMA protocol\cite{WeiTCOM2017}, user $k$ first decodes the message of user $K$ and subtract the signal of user $K$ from its received signal, and then perform SIC sequentially in a similar way as for user $K-1, \ldots, k+1$.
In other words, in \eqref{DLRx2}, the intra-group interference from the other users with $\forall k' > k$ within the same NOMA group can be eliminated via SIC decoding at user $k$.
Therefore, the individual data rate of user $k$ when associated with RF chain $r$ is given by
\vspace{-2mm}
\begin{equation}\label{DLIndividualRate1}
{R_{k,r}} = {\log _2}\left( {1 + \frac{{{u_{k,r}}{p_{k,r}}{{\left| {{{\tilde h}_{k,r}}} \right|}^2}}}{{I_{k,r}^{{{\mathrm{ inter}}}} + I_{k,r}^{{{\mathrm{intra}}}} + \sigma^2}}} \right),\vspace{-2mm}
\end{equation}
with
\vspace{-2mm}
\begin{align}
I_{k,r}^{{{\mathrm{inter}}}} &= \sum\nolimits_{r' \ne r}^{{N_{{\mathrm{RF}}}}} {{{\left| {{{\tilde h}_{k,r'}}} \right|}^2}\sum\nolimits_{k' = 1}^K {{u_{k',r'}}{p_{k',r'}}} } \;\mathrm{and}\;\notag\\[-0.5mm]
I_{k,r}^{{{\mathrm{intra}}}}  &= \sum\nolimits_{k' = 1}^{k-1} {{u_{k',r}}{p_{k',r}}{{\left| {{{\tilde h}_{k,r}}} \right|}^2}}
\end{align}
\par
\vspace{-2mm}
\noindent
denoting the inter-group interference power and intra-group interference power, respectively.
Note that with the formulation in \eqref{DLIndividualRate1}, we have ${R_{k,r}} = 0$ if ${u_{k,r}} = 0$.
During SIC decoding of the message of user $k'$ at user $k$, $\forall k' > k$, the achievable data rate is given by
\vspace{-2mm}
\begin{equation}\label{IndividualRate2}
R_{k,k',r} = {\log _2}\left( {1 + \frac{{{u_{k',r}}{p_{k',r}}{{\left| {{{\tilde h}_{k,r}}} \right|}^2}}}{{I_{k,r}^{{\mathrm{inter}}} + I_{k,k',r}^{{\mathrm{intra}}} + \sigma^2}}} \right),\vspace{-2mm}
\end{equation}
where $I_{k,k',r}^{{\mathrm{intra}}} = \sum\nolimits_{k'' = 1}^{k' - 1} {{u_{k'',r}}{p_{k'',r}}{{\left| {{{\tilde h}_{k,r}}} \right|}^2}}$ denotes the intra-group interference power when decoding the message of user $k'$ at user $k$.
To guarantee the success of the SIC decoding, we need to maintain the rate condition as follows \cite{Sun2016Fullduplex}
\vspace{-2mm}
\begin{equation}\label{DLSICDecoding}
R_{k,k',r} \ge {R_{k',r}}, \forall k' > k.\vspace{-2mm}
\end{equation}
Note that, when user $k'$ is not allocated with RF chain $r$, we have $R_{k,k',r} = {R_{k',r}} =0$ and this condition \eqref{DLSICDecoding} will always be satisfied.
Now, the individual rate of user $k$ is defined as $R_{k} = {\sum\nolimits_{r = 1}^{{N_{\mathrm{RF}}}} {{R_{k,r}}}}$, $\forall k$, and the system sum-rate is given by
\vspace{-2mm}
\begin{align}\label{SumRate}
{R_{{\mathrm{sum}}}} = \sum\limits_{k = 1}^K {\sum\limits_{r = 1}^{{N_{RF}}} {{{\log }_2}\left( {1 + \frac{{{u_{k,r}}{p_{k,r}}{{\left| {{{\tilde h}_{k,r}}} \right|}^2}}}{{I_{k,r}^{{{\mathrm{inter}}}} + I_{k,r}^{{{\mathrm{intra}}}} + \sigma^2}}} \right)} }.
\end{align}
\par
\vspace{-2mm}
\noindent

Different from the single-beam NOMA schemes for
mmWave systems\cite{Ding2017RandomBeamforming,Cui2017Optimal, WangBeamSpace2017} and conventional NOMA schemes for microwave systems\cite{WeiTCOM2017}, the system sum-rate in \eqref{SumRate} depends on the effective channel gains, which varies according to the adopted user grouping and antenna allocation strategy.

\section{A Single-RF Chain System}
To reveal more insights of the proposed multi-beam NOMA framework, in this section, we consider a simple single-RF chain system where a single RF chain is equipped at each transceiver.
In particular, we study the asymptotic performance analysis of the system sum-rate for the proposed scheme, in the large number of antennas and high signal-to-noise ratio (SNR) regimes.
Furthermore, we compare the asymptotic system sum-rate between the proposed multi-beam NOMA scheme and the conventional TDMA scheme, and derive the antenna allocation conditions that the proposed scheme outperforms TDMA in terms of system sum-rate.

\vspace{-1mm}
\subsection{Asymptotic Performance Analysis}
\vspace{-1mm}
We consider a single-RF chain BS serving $K$ users allocated with this RF chain.
To facilitate the following presentation, the subscript of RF chain $r$ is omitted in this section.
In this section, we focus on the asymptotic system performance when the number of antennas equipped at the BS and each user as well as the allocated number of antennas for user $k$ are sufficiently large, i.e., $M_{\mathrm{BS}}, M_{\mathrm{UE}}, M_k \to \infty$.
As a result, the effective channel gain of user $k$, ${\tilde h_{k}}$ in \eqref{EffectiveChannel2} can be approximated by
\vspace{-3mm}
\begin{equation}\label{EffectiveChannel3}
{\tilde h_{k}} \mathop  \approx \limits_{{M_{{\mathrm{BS}}}},{M_{{\mathrm{UE}}}},{M_k} \to \infty }  {\alpha _{k,0}}{\sqrt {\frac{{{M_{{\mathrm{UE}}}}}}{{{M_{\mathrm{BS}}}}}} } M_k, \forall k.\vspace{-2mm}
\end{equation}
Note that the approximation in \eqref{EffectiveChannel3} is obtained via \eqref{EffectiveChannel2} by applying $\mathop {\lim }\limits_{{M_{k'}} \to \infty } \frac{{\sin \left( {{M_{k'}}{\varphi _{k,k',l}}} \right)}}{{\sin \left( {{\varphi _{k',k}}} \right)}} \to 0$ with ${\varphi _{k,k',l}} > 0$ and $\mathop {\lim }\limits_{{M_{\mathrm{UE}}} \to \infty } \frac{{\sin \left( {{M_{{\mathrm{UE}}}}{\kappa _{k,l}}} \right)}}{{\sin \left( {{\kappa _{k,l}}} \right)}} \to 0$ with ${\kappa _{k,l}} > 0$.
In other words, only the LOS path contributes to the effective channel \eqref{EffectiveChannel3} due to the narrow analog beamwidth in the large antenna regime.

With a single RF chain and a single NOMA group, there is no inter-group interference $I_{k}^{{{\mathrm{inter}}}}$ in \eqref{DLIndividualRate1} and \eqref{IndividualRate2}.
Therefore, the rate condition of SIC decoding in \eqref{DLSICDecoding} is equivalent to
\vspace{-2mm}
\begin{equation}\label{DLSICDecoding2}
\left| {{\alpha _{k,0}}} \right|{M_k} \ge \left| {{\alpha _{k',0}}} \right|{M_{k'}}, \forall k' \ge k,\vspace{-2mm}
\end{equation}
which means that the antenna allocation should maintain the strong user having a larger effective channel gain compared to weak users.
Then, the asymptotic performance of the individual rates and the system sum-rate is summarized in the following theorem.

\begin{Thm}\label{NOMASumRateApproximate}
With successful SIC decoding, the asymptotic individual data rates for the proposed multi-beam NOMA scheme can be approximated by
\vspace{-2mm}
\begin{align}
\hspace{-3mm}R_1^{{\mathrm{NOMA}}} & \mathop  \approx \limits_{\rho  \to \infty }  {\log _2}\hspace{-1mm}\left( {\frac{{\rho {p_1}}{{\left| {{\alpha _{1,0}}} \right|}^2}{{M_{{\mathrm{UE}}}}M_1^2}}{{{M_{{\mathrm{BS}}}}}}} \right)\; \mathrm{and}\label{NOMAIndividualRateAppro21}\\[-0.5mm]
\hspace{-3mm}R_k^{{\mathrm{NOMA}}} &\mathop  \approx \limits_{\rho  \to \infty } {\log _2}\hspace{-1mm}\left( {1 \hspace{-1mm}+\hspace{-1mm} \frac{{{p_k}}}{{\sum\nolimits_{k' = 1}^{k - 1} {{p_{k'}}} }}} \right), \forall k = \{2,\ldots,K\},\label{NOMAIndividualRateAppro22}
\end{align}
\par
\vspace{-2mm}
\noindent
with the asymptotic system sum-rate approximated by
\vspace{-2mm}
\begin{equation}\label{NOMASumRateAppro1}
R_{\mathrm{sum}}^{{\mathrm{NOMA}}} \mathop  \approx \limits_{\rho  \to \infty } {\log _2}\left( { {\frac{\rho {p_{{\mathrm{max}}}}{{\left| {{\alpha _{1,0}}} \right|}^2}{{M_{{\mathrm{UE}}}}M_1^2}}{{{M_{{\mathrm{BS}}}}}}}} \right),\vspace{-3mm}
\end{equation}
where $\rho  = \frac{1}{{{\sigma ^2}}} \to \infty$.
\end{Thm}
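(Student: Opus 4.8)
The plan is to substitute the asymptotic effective channel gain \eqref{EffectiveChannel3} into the individual rate expression \eqref{DLIndividualRate1} (with $I_{k}^{\mathrm{inter}}=0$, since a single RF chain means a single NOMA group) and then take the high-SNR limit $\rho\to\infty$ separately for user $1$ and for users $k\ge 2$. From \eqref{EffectiveChannel3} I have $\abs{\tilde h_k}^2 \approx \abs{\alpha_{k,0}}^2 \frac{M_{\mathrm{UE}}}{M_{\mathrm{BS}}} M_k^2$, which is the only ingredient beyond the rate definitions that is needed.

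First I would treat user $1$. As the strongest user it performs no SIC-induced subtraction of higher-indexed users in its own denominator, so its intra-group interference $I_{1}^{\mathrm{intra}}=\sum_{k'=1}^{0}(\cdot)$ is an empty sum and vanishes. Hence $R_1=\log_2\left(1+\rho p_1\abs{\tilde h_1}^2\right)$, and as $\rho\to\infty$ the additive $1$ is negligible against the SNR term; inserting $\abs{\tilde h_1}^2$ then gives \eqref{NOMAIndividualRateAppro21}. Next, for each $k\ge 2$ the crucial observation is that the common factor $\abs{\tilde h_k}^2$ appears in both the desired-signal power and the intra-group interference $I_{k}^{\mathrm{intra}}=\sum_{k'=1}^{k-1}p_{k'}\abs{\tilde h_k}^2$. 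Letting $\sigma^2\to 0$ (i.e. $\rho\to\infty$) makes the noise negligible, so the SINR ratio collapses to $p_k/\sum_{k'=1}^{k-1}p_{k'}$ with the channel gain cancelling entirely, which is precisely \eqref{NOMAIndividualRateAppro22}. It is worth emphasizing that this interference-limited behaviour is what renders these rates independent of the antenna allocation.

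Finally, for the sum-rate I would add the individual rates, $R_{\mathrm{sum}}^{\mathrm{NOMA}}=R_1^{\mathrm{NOMA}}+\sum_{k=2}^{K}R_k^{\mathrm{NOMA}}$. The key step is recognizing that $\sum_{k=2}^{K}\log_2\left(1+\frac{p_k}{\sum_{k'=1}^{k-1}p_{k'}}\right)$ can be rewritten as $\sum_{k=2}^{K}\log_2\left(\frac{\sum_{k'=1}^{k}p_{k'}}{\sum_{k'=1}^{k-1}p_{k'}}\right)$, which telescopes to $\log_2\left(\frac{\sum_{k'=1}^{K}p_{k'}}{p_1}\right)$. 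Combining this with $R_1^{\mathrm{NOMA}}$ from \eqref{NOMAIndividualRateAppro21} and invoking the full-power constraint $\sum_{k'=1}^{K}p_{k'}=p_{\mathrm{max}}$ causes the $p_1$ factors to cancel, leaving \eqref{NOMASumRateAppro1}. The main obstacle here is not analytical difficulty but rather correctly assembling the telescoping product and justifying that the power budget is used with equality; once those are in place the result follows by direct substitution.
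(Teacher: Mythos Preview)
Your proposal is correct and follows essentially the same approach as the paper: write the individual rate \eqref{DLIndividualRate1} with $I_k^{\mathrm{inter}}=0$ for the single-RF-chain case, substitute the asymptotic effective channel \eqref{EffectiveChannel3}, and take the high-SNR limit. Your explicit telescoping argument for the sum-rate and your remark that $\sum_{k'}p_{k'}=p_{\mathrm{max}}$ must hold with equality are details the paper leaves implicit under ``can be easily obtained,'' but they do not constitute a different route.
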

\begin{proof}
Assuming the antenna allocation satisfying the condition in \eqref{DLSICDecoding2}, according to \eqref{DLIndividualRate1}, the individual data rate of each NOMA user can be rewritten as
\vspace{-2mm}
\begin{equation}\label{NOMAIndividualRateAppro1}
R_k^{{\mathrm{NOMA}}} = {\log _2}\left( {1 + \frac{{\rho {p_k}{{\left| {{{\tilde h}_k}} \right|}^2}}}{{\rho \sum\nolimits_{k' = 1}^{k - 1} {{p_{k'}}{{\left| {{{\tilde h}_{k}}} \right|}^2}}  + 1}}} \right), \forall k.\vspace{-2mm}
\end{equation}
Substituting the effective channel expression from \eqref{EffectiveChannel3} into \eqref{NOMAIndividualRateAppro1}, the analytical results in \eqref{NOMAIndividualRateAppro21}, \eqref{NOMAIndividualRateAppro22}, and \eqref{NOMASumRateAppro1} can be easily obtained with the high SNR approximation.
\end{proof}

From \eqref{EffectiveChannel3} and \eqref{NOMASumRateAppro1}, it can be observed that the asymptotic system sum-rate is only determined by the effective channel gain of user $1$, ${{\left| {{{\tilde h}_1}} \right|}^2}$, and hence is determined by the number of antennas allocated for the strongest user.
%
%
From \eqref{NOMAIndividualRateAppro22}, we can observe that the individual rates of weak users, $k = 2,\ldots,K$, are determined by the power allocation ratio, $\frac{{{p_k}}}{{\sum\nolimits_{k' = 1}^{k - 1} {{p_{k'}}} }}$.
In contrast, from \eqref{NOMAIndividualRateAppro21}, the individual rate of the strongest user $R_1$ is determined by not only the power allocation $p_1$, but also the allocated number of antennas $M_1$.
%
%
Therefore, the strongest user prefers more antennas than the weak users, while weak user $k$ prefers more power compared to the strong users, $k' < k$.
It is due to the fact that the weak users suffer from inter-user interference (IUI) as in \eqref{NOMAIndividualRateAppro1}, while the strongest user is IUI free owing to the SIC decoding.
Hence, allocating more antennas to weak users not only increases the strength of their own received signal, but also increases the IUI, which only leads to a marginal gain in data rate.
In contrast, allocating more antennas to the strongest user will directly increase its received signal power and individual rate.
%
%
\vspace{-2mm}
\subsection{Comparison between NOMA and TDMA}
\vspace{-1mm}
Considering TDMA as the baseline OMA scheme, the individual rate and system sum-rate of TDMA are given by
\vspace{-2mm}
\begin{align}
R_k^{{\mathrm{TDMA}}} &= {\beta_k}{\log _2}\left( {1 + \rho p_{\mathrm{max}}{{\left| {{h_k}} \right|}^2}} \right), \forall k, \;\mathrm{and} \notag\\[-1mm]
R_{\mathrm{sum}}^{{\mathrm{TDMA}}} &= \sum\nolimits_{k = 1}^K {{\beta_k}{{\log }_2}\left( {1 + \rho p_{\mathrm{max}}{{\left| {{h_k}} \right|}^2}} \right)},
\end{align}
\par
\vspace{-2mm}
\noindent
respectively, where ${\beta_k}$ denotes the allocated time for user $k$ and ${{\left| {{h_k}} \right|}^2}$ denotes the effective channel of user $k$ without beam splitting.
With $M_{\mathrm{BS}} \to \infty$, ${{\left| {{h_k}} \right|}^2}$ can be approximated by ${{\left| {{h_k}} \right|}^2} \approx {{\left| {\alpha _{k,0}} \right|}^2}{ {{{{M_{{\mathrm{UE}}}}}}{{{M_{\mathrm{BS}}}}}} }$, $\forall k$.
Similar to \eqref{NOMASumRateAppro1}, the high SNR approximation of the system sum-rate of TDMA is given by
\vspace{-1mm}
\begin{equation}\label{OMASumRateAppro1}
\hspace{-2mm}R_{\mathrm{sum}}^{{\mathrm{TDMA}}} \mathop  \approx \limits_{\rho  \to \infty } \sum\nolimits_{k = 1}^K {{\beta_k}{{\log }_2}\left( { \rho p_{\mathrm{max}}{{\left| {\alpha _{k,0}} \right|}^2}{ {{{{M_{{\mathrm{UE}}}}}}{{{M_{\mathrm{BS}}}}}} }} \right)}.\vspace{-2mm}
\end{equation}

To analyze the effect of beam splitting on the performance of the proposed multi-beam NOMA scheme, we restrict ourselves to the case of equal power allocation for the proposed multi-beam NOMA scheme as well as equal
time allocation for the TDMA scheme, i.e., ${p_k} = \frac{1}{K}p_{\mathrm{max}}$ and ${\beta_k} = \frac{1}{K}$.
Now, we introduce the following corollaries which can be obtained from Theorem \ref{NOMASumRateApproximate}.

\begin{Cor}
The performance gain of the proposed multi-beam NOMA scheme over the TDMA scheme is given by
\vspace{-5mm}
\begin{align}\label{PerformanceGain}
R_{\mathrm{sum}}^{{\mathrm{NOMA}}} - R_{\mathrm{sum}}^{{\mathrm{TDMA}}}
= 2{\log _2}\left( {\frac{{{M_1}}}{{{M_{{\mathrm{BS}}}}}}\frac{{\left| {{\alpha _{1,0}}} \right|}}{\overline{{\left| {{\alpha _{0}}} \right|}}}} \right),
\end{align}
\par
\vspace{-2.5mm}
\noindent
where ${\overline{{\left| {{\alpha _{0}}} \right|}}} = {{{{\left( {\mathop \Pi \limits_{k = 1}^K {{\left| {{\alpha _{k,0}}} \right|}}} \right)}^{\frac{1}{K}}}}}$ denotes the geometric mean of all the users' LOS path gain.
\end{Cor}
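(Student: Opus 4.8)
The plan is to start from the asymptotic system sum-rate of multi-beam NOMA in \eqref{NOMASumRateAppro1} and the high-SNR TDMA sum-rate in \eqref{OMASumRateAppro1}, then substitute the equal power and equal time allocation assumption $p_k = \frac{1}{K}p_{\mathrm{max}}$ and $\beta_k = \frac{1}{K}$. Under these assumptions the NOMA expression keeps its form with $p_{\mathrm{max}}$ intact, while the TDMA expression becomes $R_{\mathrm{sum}}^{{\mathrm{TDMA}}} \approx \frac{1}{K}\sum_{k=1}^{K} {\log_2}\bigl(\rho p_{\mathrm{max}} {\left|\alpha_{k,0}\right|}^2 M_{\mathrm{UE}} M_{\mathrm{BS}}\bigr)$. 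The goal is then to subtract the two and collect all $\rho$, $p_{\mathrm{max}}$, $M_{\mathrm{UE}}$, and $M_{\mathrm{BS}}$ terms to verify they cancel in the right way, leaving only a ratio involving $M_1$, $M_{\mathrm{BS}}$, and the path gains.

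First I would write the TDMA sum as a single logarithm using the log-of-product identity, turning $\frac{1}{K}\sum_{k=1}^{K}\log_2(\cdot)$ into $\log_2\bigl(\prod_{k=1}^{K}(\cdot)\bigr)^{1/K}$. This is precisely where the geometric mean $\overline{\left|\alpha_0\right|} = \bigl(\Pi_{k=1}^{K} \left|\alpha_{k,0}\right|\bigr)^{1/K}$ emerges: the product of the $K$ copies of $\rho p_{\mathrm{max}} M_{\mathrm{UE}} M_{\mathrm{BS}}$ gives those common factors to the $K$-th power, which after the $1/K$ exponent returns a single copy, while the product $\prod_k \left|\alpha_{k,0}\right|^2$ raised to $1/K$ yields $\overline{\left|\alpha_0\right|}^2$. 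So $R_{\mathrm{sum}}^{{\mathrm{TDMA}}} \approx \log_2\bigl(\rho p_{\mathrm{max}} \overline{\left|\alpha_0\right|}^2 M_{\mathrm{UE}} M_{\mathrm{BS}}\bigr)$.

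Next I would subtract this from $R_{\mathrm{sum}}^{{\mathrm{NOMA}}} \approx \log_2\bigl(\rho p_{\mathrm{max}} \left|\alpha_{1,0}\right|^2 M_{\mathrm{UE}} M_1^2 / M_{\mathrm{BS}}\bigr)$, using $\log_2 A - \log_2 B = \log_2(A/B)$. The factors $\rho$, $p_{\mathrm{max}}$, and $M_{\mathrm{UE}}$ cancel exactly. What remains inside the logarithm is $\frac{\left|\alpha_{1,0}\right|^2 M_1^2 / M_{\mathrm{BS}}}{\overline{\left|\alpha_0\right|}^2 M_{\mathrm{BS}}} = \frac{\left|\alpha_{1,0}\right|^2}{\overline{\left|\alpha_0\right|}^2}\frac{M_1^2}{M_{\mathrm{BS}}^2}$, which is exactly the squared version of the argument in \eqref{PerformanceGain}. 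Pulling the square out of the logarithm as the factor $2$ gives $2\log_2\bigl(\frac{M_1}{M_{\mathrm{BS}}}\frac{\left|\alpha_{1,0}\right|}{\overline{\left|\alpha_0\right|}}\bigr)$, completing the derivation.

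I expect no serious obstacle here, since the corollary is an algebraic consequence of Theorem \ref{NOMASumRateApproximate} once the equal-allocation substitution is made; the only point requiring care is the bookkeeping of the $M_{\mathrm{BS}}$ powers (one copy appears in the denominator of the NOMA rate and one copy in the numerator of the TDMA rate, so they combine to $M_{\mathrm{BS}}^2$ in the denominator of the final ratio) and the correct emergence of the geometric mean from the equal-time-averaged TDMA sum. The high-SNR approximations are inherited directly from the theorem and the preceding TDMA analysis, so they need not be re-justified.
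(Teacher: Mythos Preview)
Your derivation is correct and is precisely the natural route: substitute the equal-time allocation into \eqref{OMASumRateAppro1}, collapse the average of logarithms into the logarithm of the geometric mean, subtract from \eqref{NOMASumRateAppro1}, cancel the common factors $\rho$, $p_{\mathrm{max}}$, $M_{\mathrm{UE}}$, and extract the square. The paper itself omits the proof of this corollary due to page limitations, so there is no alternative argument to compare against; your bookkeeping of the $M_{\mathrm{BS}}$ powers and the emergence of the geometric mean are exactly what is required.
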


\begin{Cor}
The proposed multi-beam NOMA scheme provides a higher system sum-rate over TDMA when \eqref{DLSICDecoding2} and the following condition are satisfied:
\vspace{-2mm}
\begin{equation}\label{BreakEvenCondition}
\left| {{\alpha _{1,0}}} \right|{M_1} > {M_{{\mathrm{BS}}}}\overline{\left| {{\alpha _{0}}} \right|}.\vspace{-2mm}
\end{equation}
\end{Cor}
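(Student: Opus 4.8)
The plan is to read the result off directly from Corollary 1, which already furnishes the \emph{exact} asymptotic gap between the two schemes under the equal-power and equal-time allocation assumption. Since Corollary 1 gives
\[
R_{\mathrm{sum}}^{{\mathrm{NOMA}}} - R_{\mathrm{sum}}^{{\mathrm{TDMA}}} = 2{\log _2}\left( {\frac{{{M_1}}}{{{M_{{\mathrm{BS}}}}}}\frac{{\left| {{\alpha _{1,0}}} \right|}}{\overline{{\left| {{\alpha _{0}}} \right|}}}} \right),
\]
the question of when multi-beam NOMA outperforms TDMA reduces to the question of when this right-hand side is strictly positive, so no further rate analysis is needed.

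First I would invoke the strict monotonicity of the base-$2$ logarithm together with the fact $\log_2(1)=0$: for any positive argument $x$ one has $2\log_2 x > 0$ precisely when $x > 1$. Applying this to the gain expression yields the equivalence
\[
R_{\mathrm{sum}}^{{\mathrm{NOMA}}} > R_{\mathrm{sum}}^{{\mathrm{TDMA}}} \Longleftrightarrow \frac{{{M_1}}}{{{M_{{\mathrm{BS}}}}}}\frac{{\left| {{\alpha _{1,0}}} \right|}}{\overline{{\left| {{\alpha _{0}}} \right|}}} > 1.
\]
Because $M_{\mathrm{BS}}$ and $\overline{{\left| {{\alpha _{0}}} \right|}}$ are strictly positive quantities, I would then clear the denominators by cross-multiplication, which rearranges the inequality on the right into the stated break-even condition $\left| {{\alpha _{1,0}}} \right|{M_1} > {M_{{\mathrm{BS}}}}\overline{\left| {{\alpha _{0}}} \right|}$, completing the argument.

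The one conceptual point I would make explicit is why the corollary must also impose \eqref{DLSICDecoding2} alongside \eqref{BreakEvenCondition}. The gain formula inherited from Corollary 1 rests on the asymptotic NOMA rate expressions of Theorem \ref{NOMASumRateApproximate}, and the derivation of those expressions presumes successful SIC decoding, i.e.\ the effective-channel ordering \eqref{DLSICDecoding2}. I would therefore retain \eqref{DLSICDecoding2} as a standing hypothesis so that the rate expressions, and hence the gain formula, remain valid; only under this hypothesis does the sign analysis apply.

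There is no genuine analytical obstacle, since the core of the proof is a one-line monotonicity argument. The only subtlety worth flagging is joint feasibility: one should check that \eqref{DLSICDecoding2} and \eqref{BreakEvenCondition} can be met simultaneously under the antenna budget $\sum_{k} M_k \le M_{\mathrm{BS}}$. This is indeed possible because user $1$ is the strongest, which forces $\overline{{\left| {{\alpha _{0}}} \right|}} \le \left| {{\alpha _{1,0}}} \right|$ and hence leaves room for $M_1 \le M_{\mathrm{BS}}$ to satisfy \eqref{BreakEvenCondition}.
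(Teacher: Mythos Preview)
Your proposal is correct and is precisely the natural argument: the paper itself omits the proofs of both corollaries ``due to page limitation,'' but Corollary~2 is clearly meant to be read off from Corollary~1 exactly as you do, by asking when the logarithm in the gain formula is strictly positive and clearing denominators. Your explicit remark that \eqref{DLSICDecoding2} is needed to validate the NOMA rate expressions of Theorem~\ref{NOMASumRateApproximate} is also the right justification for carrying that hypothesis.
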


Due to page limitation, the proofs for the corollaries are omitted.
Furthermore, it can be easily verified that the conditions \eqref{DLSICDecoding2} and \eqref{BreakEvenCondition} are sufficient and necessary conditions to guarantee the proposed scheme outperforming the TDMA scheme in terms of system sum-rate.
\vspace{-1mm}
\section{Numerical Results}
\vspace{-1mm}
In this section, we evaluate the performance of our proposed multi-beam NOMA scheme in a single-RF chain system through simulations.
We consider $L=30$ paths for the channel model in \eqref{ChannelModel1}, where the LOS and NLOS path gains are generated according to corresponding models in \cite{AkdenizChannelMmWave}, respectively.
The number of antennas equipped at the BS and users are ${M_{{\mathrm{BS}}}} = 128$ and ${M_{{\mathrm{UE}}}} = 10$, respectively, the maximum transmit power at the BS is $p_{\mathrm{max}} = 46$ dBm, and the noise variance is $\sigma^2 = -88$ dBm.
All the $K$ users are randomly deployed in a cell with the cell size of $500$ m.
\vspace{-1mm}
\subsection{Multi-Beam NOMA versus TDMA}
\vspace{-1mm}
This simulation compares the performance between the proposed multi-beam NOMA scheme and the conventional TDMA scheme.
We consider a two-user case, i.e, $K=2$, and assume that user $1$ is the strong user while user $2$ is the weak user, i.e., ${\left| {{\alpha _{1,0}}} \right|} \ge {\left| {{\alpha _{2,0}}} \right|}$.
Two simulation cases with $\frac{{\left| {{\alpha _{1,0}}} \right|}}{{\left| {{\alpha _{2,0}}} \right|}} = 5$ and $\frac{{\left| {{\alpha _{1,0}}} \right|}}{{\left| {{\alpha _{2,0}}} \right|}} = 10$ are considered.
Figure 4 illustrates the average system sum-rate versus the number of antennas allocated to user 1, $M_1$, for the proposed scheme and the TDMA scheme.
Note that the search range of $M_1$ is kept to satisfy the conditions in \eqref{DLSICDecoding2} and \eqref{BreakEvenCondition}.
The threshold predicted with \eqref{BreakEvenCondition} is illustrated with vertical red lines.
As mentioned before, we consider equal power allocation for the proposed multi-beam NOMA scheme and equal time allocation for the TDMA scheme.
It can be observed that the simulation results match well with the developed asymptotic analysis for both schemes.
Besides, as predicted in \eqref{NOMASumRateAppro1}, the system sum-rate of the proposed multi-beam NOMA scheme increases monotonically with $M_1$.
Furthermore, the antenna allocation condition in \eqref{BreakEvenCondition} can accurately predict the threshold for the number of antennas allocated for user 1, such that the proposed multi-beam NOMA scheme provides a higher system sum-rate than that of the TDMA scheme.
Comparing the two simulation cases of $\frac{{\left| {{\alpha _{1,0}}} \right|}}{{\left| {{\alpha _{2,0}}} \right|}} = 5$ and $\frac{{\left| {{\alpha _{1,0}}} \right|}}{{\left| {{\alpha _{2,0}}} \right|}} = 10$, the performance gain of the proposed multi-beam NOMA scheme over the TDMA scheme in the second case is higher than that of the first case.
This is due to the fact that NOMA can efficiently exploit the disparity of the channel gains for improving the system performance.

\begin{figure}[t]
\vspace{-3mm}
\centering
\includegraphics[width=3in]{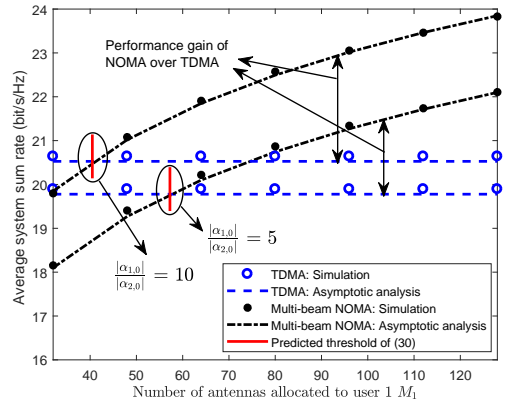}\vspace{-3mm}
\caption{Average system sum-rate  versus $M_1$ for the proposed multi-beam NOMA scheme and the TDMA scheme in a single-RF chain system.}\vspace{-6mm}
\label{SingleRFChainSystemsThreshold}
\end{figure}

\vspace{-1mm}
\subsection{Multi-Beam NOMA versus Single-Beam NOMA}
\vspace{-1mm}
This section investigates the performance gain of the proposed multi-beam NOMA scheme over a baseline scheme utilizing single-beam NOMA \cite{Ding2017RandomBeamforming,Cui2017Optimal,WangBeamSpace2017} for a scenario with $K = 5$ users.
In particular, for the baseline scheme, when multiple users' AODs are within the same analog beam, we adopt the single-beam NOMA\cite{Ding2017RandomBeamforming,Cui2017Optimal,WangBeamSpace2017} to serve them simultaneously, otherwise the BS serves them using TDMA with equal time allocation.
For the proposed multi-beam NOMA scheme, we allocate $M_1 = 100$ antennas to the strongest user, and the remaining $28$ antennas are equally allocated to the other four users with $M_2 = M_3 = M_4 = M_5 = 7$.
Note that, this antenna allocation strategy can satisfy the antenna allocation conditions \eqref{DLSICDecoding2} and \eqref{BreakEvenCondition} with a high probability.
Figure 5 shows the average sum-rate versus the maximum transmit power at the BS $p_{\mathrm{max}}$ for the proposed multi-beam NOMA scheme, the baseline scheme, and the TDMA scheme.
Equal power allocation is adopted for all the three schemes and equal time allocation is adopted for the TDMA scheme.
It can be observed that the baseline scheme is only slightly better than the TDMA scheme.
In fact, due to the large number of antennas array equipped at the BS, the probability of multiple users that fall in the  main lobe of an analog beam is very small due to the narrow analog beamwidth.
Therefore, the contribution of the single-beam NOMA to the system performance enhancement of the baseline scheme compared to the TDMA scheme is limited.
In contrast, the proposed multi-beam scheme enables NOMA transmission to multiple users with an arbitrary AODs distribution, which can achieve a considerable performance gain over the baseline scheme.
In addition, a constant performance gain can be achieved by the proposed multi-beam NOMA scheme over the TDMA scheme when $p_{\mathrm{max}}$ increases.
In fact, owing to the massive number of antennas, the considered hybrid mmWave system operate at high SNR regime.
Hence, the performance gain of multi-beam NOMA over TDMA can be accurately predicted by \eqref{PerformanceGain}, which no longer depends on $p_{\mathrm{max}}$.

\begin{figure}[t]
\vspace{-3mm}
\centering
\includegraphics[width=3in]{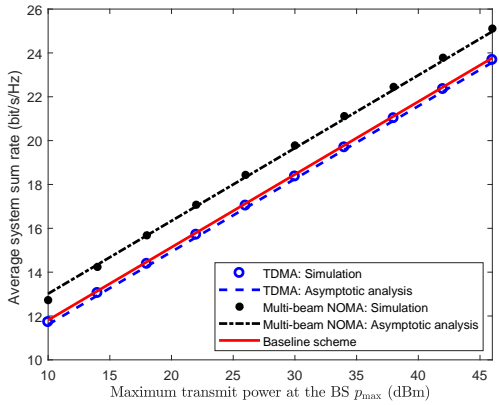}\vspace{-4mm}
\caption{Average system sum-rate versus maximum transmit power at the BS $p_{\mathrm{max}}$ for the proposed multi-beam NOMA scheme, the baseline scheme, and the TDMA scheme in a single-RF chain system.}\vspace{-6mm}
\label{SingleRFChainSystemsThreshold}
\end{figure}

\vspace{-1mm}
\section{Conclusion}
\vspace{-0.5mm}
In this paper, a novel multi-beam NOMA framework was proposed for hybrid mmWave systems to accommodate more users with a limited number of RF chains.
Compared to the single-beam NOMA scheme, the proposed multi-beam NOMA scheme enjoys not only the higher spectral efficiency but also the flexibility for serving multiple users.
The multiple analog beams are generated using the proposed beam splitting technique, which dynamically divides the whole antenna array into multiple subarrays and each subarray generates an analog beam.
For a single-RF chain system, we analyzed the asymptotic system sum-rate in the large number of antennas and high SNR regimes.
Furthermore, the sufficient and necessary conditions of antenna allocation,  which guarantee the proposed scheme outperforming the conventional TDMA scheme, were obtained.
Simulation results confirmed the accuracy of our developed analysis and demonstrated that our proposed multi-beam NOMA scheme provides a higher spectral efficiency than the single-beam NOMA scheme as well as the TDMA scheme.


\begin{thebibliography}{10}
\providecommand{\url}[1]{#1}
\csname url@samestyle\endcsname
\providecommand{\newblock}{\relax}
\providecommand{\bibinfo}[2]{#2}
\providecommand{\BIBentrySTDinterwordspacing}{\spaceskip=0pt\relax}
\providecommand{\BIBentryALTinterwordstretchfactor}{4}
\providecommand{\BIBentryALTinterwordspacing}{\spaceskip=\fontdimen2\font plus
\BIBentryALTinterwordstretchfactor\fontdimen3\font minus
  \fontdimen4\font\relax}
\providecommand{\BIBforeignlanguage}[2]{{%
\expandafter\ifx\csname l@#1\endcsname\relax
\typeout{** WARNING: IEEEtran.bst: No hyphenation pattern has been}%
\typeout{** loaded for the language `#1'. Using the pattern for}%
\typeout{** the default language instead.}%
\else
\language=\csname l@#1\endcsname
\fi
#2}}
\providecommand{\BIBdecl}{\relax}
\BIBdecl

\bibitem{Rappaport2013}
T.~Rappaport, S.~Sun, R.~Mayzus, H.~Zhao, Y.~Azar, K.~Wang, G.~Wong, J.~Schulz,
  M.~Samimi, and F.~Gutierrez, ``Millimeter wave mobile communications for {5G}
  cellular: {It} will work!'' \emph{IEEE Access}, vol.~1, pp. 335--349, May
  2013.

\bibitem{XiaoMing2017}
M.~Xiao, S.~Mumtaz, Y.~Huang, L.~Dai, Y.~Li, M.~Matthaiou, G.~K. Karagiannidis,
  E.~Bj{\"{o}}rnson, K.~Yang, C.~L. I, and A.~Ghosh, ``Millimeter wave
  communications for future mobile networks,'' \emph{IEEE J. Select. Areas
  Commun.}, vol.~35, no.~9, pp. 1909--1935, Sep. 2017.

\bibitem{wong2017key}
V.~W. Wong, R.~Schober, D.~W.~K. Ng, and L.-C. Wang, \emph{Key Technologies for
  {5G} Wireless Systems}.\hskip 1em plus 0.5em minus 0.4em\relax Cambridge
  University Press, 2017.

\bibitem{zhao2017multiuser}
L.~Zhao, D.~W.~K. Ng, and J.~Yuan, ``Multi-user precoding and channel
  estimation for hybrid millimeter wave systems,'' \emph{IEEE J. Select. Areas
  Commun.}, vol.~35, no.~7, Jul. 2017.

\bibitem{GaoSubarray}
X.~Gao, L.~Dai, S.~Han, C.~L. I, and R.~W. Heath, ``Energy-efficient hybrid
  analog and digital precoding for {MmWave} {MIMO} systems with large antenna
  arrays,'' \emph{IEEE J. Select. Areas Commun.}, vol.~34, no.~4, pp.
  998--1009, Apr. 2016.

\bibitem{lin2016energy}
C.~Lin and G.~Y. Li, ``Energy-efficient design of indoor mmwave and {sub-THz}
  systems with antenna arrays,'' \emph{IEEE Trans. Wireless Commun.}, vol.~15,
  no.~7, pp. 4660--4672, Mar. 2016.

\bibitem{van2002optimum}
H.~L. Van~Trees, \emph{Optimum array processing: Part {IV} of detection,
  estimation and modulation theory}.\hskip 1em plus 0.5em minus 0.4em\relax
  Wiley Online Library, 2002, vol.~1.

\bibitem{Ding2015b}
Z.~Ding, Y.~Liu, J.~Choi, Q.~Sun, M.~Elkashlan, C.~L. I, and H.~V. Poor,
  ``Application of non-orthogonal multiple access in {LTE} and {5G} networks,''
  \emph{IEEE Commun. Mag.}, vol.~55, no.~2, pp. 185--191, Feb. 2017.

\bibitem{Andrews2014}
J.~Andrews, S.~Buzzi, W.~Choi, S.~Hanly, A.~Lozano, A.~Soong, and J.~Zhang,
  ``What will {5G} be?'' \emph{IEEE J. Select. Areas Commun.}, vol.~32, no.~6,
  pp. 1065--1082, Jun. 2014.

\bibitem{Dai2015}
L.~Dai, B.~Wang, Y.~Yuan, S.~Han, I.~Chih-Lin, and Z.~Wang, ``Non-orthogonal
  multiple access for {5G}: solutions, challenges, opportunities, and future
  research trends,'' \emph{IEEE Commun. Mag.}, vol.~53, no.~9, pp. 74--81, Sep.
  2015.

\bibitem{WeiSurvey2016}
Z.~Wei, Y.~Jinhong, D.~W.~K. Ng, M.~Elkashlan, and Z.~Ding, ``A survey of
  downlink non-orthogonal multiple access for {5G} wireless communication
  networks,'' \emph{ZTE Commun.}, vol.~14, no.~4, pp. 17--25, Oct. 2016.

\bibitem{Kwan_AF_2010}
D.~W.~K. Ng and R.~Schober, ``Cross-layer scheduling for {OFDMA}
  amplify-and-forward relay networks,'' \emph{IEEE Trans. Veh. Technol.},
  vol.~59, no.~3, pp. 1443--1458, Mar. 2010.

\bibitem{DerrickEEOFDMA}
D.~W.~K. Ng, E.~S. Lo, and R.~Schober, ``Energy-efficient resource allocation
  in {OFDMA} systems with large numbers of base station antennas,'' \emph{IEEE
  Trans. Wireless Commun.}, vol.~11, no.~9, pp. 3292--3304, Sep. 2012.

\bibitem{Ding2014}
Z.~Ding, Z.~Yang, P.~Fan, and H.~Poor, ``On the performance of non-orthogonal
  multiple access in {5G} systems with randomly deployed users,'' \emph{IEEE
  Signal Process. Lett.}, vol.~21, no.~12, pp. 1501--1505, Dec. 2014.

\bibitem{Ding2017RandomBeamforming}
Z.~Ding, P.~Fan, and H.~V. Poor, ``Random beamforming in millimeter-wave {NOMA}
  networks,'' \emph{IEEE Access}, vol.~5, pp. 7667--7681, Feb. 2017.

\bibitem{Cui2017Optimal}
J.~Cui, Y.~Liu, Z.~Ding, P.~Fan, and A.~Nallanathan, ``Optimal user scheduling
  and power allocation for millimeter wave {NOMA} systems,'' \emph{arXiv
  preprint arXiv:1705.03064}, 2017.

\bibitem{WangBeamSpace2017}
B.~Wang, L.~Dai, Z.~Wang, N.~Ge, and S.~Zhou, ``Spectrum and energy efficient
  beamspace {MIMO-NOMA} for millimeter-wave communications using lens antenna
  array,'' \emph{IEEE J. Select. Areas Commun.}, vol.~PP, no.~99, pp. 1--1,
  Jul. 2017.

\bibitem{Xiao2017MultiBeam}
Z.~Xiao, L.~Dai, Z.~Ding, J.~Choi, and P.~Xia, ``Millimeter-wave communication
  with non-orthogonal multiple access for {5G},'' \emph{arXiv preprint
  arXiv:1709.07980}, 2017.

\bibitem{BeamTracking}
V.~Va, H.~Vikalo, and R.~W. Heath, ``Beam tracking for mobile millimeter wave
  communication systems,'' in \emph{Proc. IEEE Global Conf. on Signal and Inf.
  Process.}, Dec. 2016, pp. 743--747.

\bibitem{Hanif2016}
M.~F. Hanif, Z.~Ding, T.~Ratnarajah, and G.~K. Karagiannidis, ``A
  minorization-maximization method for optimizing sum rate in the downlink of
  non-orthogonal multiple access systems,'' \emph{IEEE Trans. Signal Process.},
  vol.~64, no.~1, pp. 76--88, Jan. 2016.

\bibitem{Wei2016NOMA}
Z.~Wei, D.~W.~K. Ng, and J.~Yuan, ``Power-efficient resource allocation for
  {MC-NOMA} with statistical channel state information,'' in \emph{Proc. IEEE
  Global Commun. Conf.}, Dec. 2016, pp. 1--7.

\bibitem{WeiTCOM2017}
Z.~Wei, D.~W.~K. Ng, J.~Yuan, and H.~M. Wang, ``Optimal resource allocation for
  power-efficient {MC-NOMA} with imperfect channel state information,''
  \emph{IEEE Trans. Commun.}, vol.~PP, no.~99, pp. 1--1, May 2017.

\bibitem{Sun2016Fullduplex}
Y.~Sun, D.~W.~K. Ng, Z.~Ding, and R.~Schober, ``Optimal joint power and
  subcarrier allocation for full-duplex multicarrier non-orthogonal multiple
  access systems,'' \emph{IEEE Trans. Commun.}, vol.~65, no.~3, pp. 1077--1091,
  Mar. 2017.

\bibitem{AkdenizChannelMmWave}
M.~R. Akdeniz, Y.~Liu, M.~K. Samimi, S.~Sun, S.~Rangan, T.~S. Rappaport, and
  E.~Erkip, ``Millimeter wave channel modeling and cellular capacity
  evaluation,'' \emph{IEEE J. Select. Areas Commun.}, vol.~32, no.~6, pp.
  1164--1179, Jun. 2014.

\end{thebibliography}


\end{document}